\definecolor{gray}{RGB}{100,100,100}
\newcommand{\dxinv}{\ensuremath{\partial_x^{-1}}}
\newcommand{\Sing}[1][m]{\ensuremath{M_0^{#1}}\xspace}
\newcommand{\Pp}[1][m]{\ensuremath{M_+^{#1}}\xspace}
\newcommand{\Pm}[1][m]{\ensuremath{M_-^{#1}}\xspace}
\newcommand{\Ppm}[1][m]{\ensuremath{M_\pm^{#1}}\xspace}
\newcommand{\Wp}{\ensuremath{W^l}\xspace}
\newcommand{\Msinh}[1][m]{\ensuremath{M^{#1}_{sh}}\xspace}
\newcommand{\Msin}[1][m]{\ensuremath{M^{#1}_{sg}}\xspace}
\newcommand{\Tr}{\ensuremath{{\it T}}\xspace}
\newcommand{\Fsinh}{\ensuremath{\Psi_{sh}}\xspace}
\newcommand{\Fsinp}{\ensuremath{\Psi_{sg}^+}\xspace}
\newcommand{\Fsinm}{\ensuremath{\Psi_{sg}^-}\xspace}
\newcommand{\Csin}{\ensuremath{c_{sg}}\xspace}
\newcommand{\Csinh}{\ensuremath{c_{sh}}\xspace}
\newcommand{\Hsinh}{\ensuremath{H_{sh}}\xspace}
\newcommand{\Hsin}{\ensuremath{H_{sg}}\xspace}
\DeclareDocumentCommand{\Leave}{ O{m} O{}}{\ensuremath{L_{#2}^{#1}}\xspace}
\newcommand{\nbhd}{\mathcal{U}}
 \title{On the integrability of the sine-Gordon equation}
 \author{Yannick Widmer}
 \date{\today}
\begin{document}
\maketitle

\ifthenelse{\boolean{focusModification}}{
  \color{gray}
}{}

\section*{Abstract}

Among other results we show that near the equilibrium point, the Hamiltonian of the sine-Gordon (SG) equation on the circle can be viewed as an element in the Poisson algebra of the modified Korteweg-de Vries (mKdV) equation and hence by well established properties of the latter equation admits Birkhoff coordinates.
On the other hand we prove that there exists a large set of smooth initial data, away from the equilibrium point and lying on the ramification locus of a double cover, for which the initial value problem of the SG equation has no classical solution.

\section{Introduction}
Consider the complex sine-Gordon equation on the circle $\T=\R/\Z$
\begin{equation} \label{eq:complexSine}
		U_{xt}=\sin(U), \quad x\in \T, t \in \R,\; U(t,x)\in\C.
\end{equation}
Solutions of \eqref{eq:complexSine} with values in $\R$, $U(x,t)=u(x,t)\in\R$ are solutions of the real sine-Gordon equation 
\begin{equation}\label{eq:sinGordon}
		u_{xt}=\sin(u), \quad x\in\T,t\in \R,\;u(x,t)\in\R. 
\end{equation} 
whereas imaginary valued solutions, $U(x,t)=\ii u(x,t)\in \ii\R$, are solutions to the sinh-Gordon equation,
\begin{equation}\label{eq:sinhGordon}
		u_{xt}=\sinh(u), \quad x\in\T,t\in \R,\;u(x,t)\in\R.
\end{equation}
 
Equation \eqref{eq:sinGordon} has already been studied extensively
since it arose long ago in the theory of pseudospherical
surfaces (cf. discussion at the end of the introduction).  In view of the many applications it is arguably among the most important equations in contemporary physics:  it  arises in  model field theories (see e.g. \cite{Skyrme1961MesonFields}), superconductivity (see e.g. \cite{Lebwohl1967Superconductivity}) and in
mechanical models of nonlinear wave propagation
(see e.g. \cite{Lamb1971PhysicsOptPulse} and references therein).

One of the main results of this paper says, that away form the ramification locus of a double cover, equation \eqref{eq:sinGordon} can be viewed as a Hamiltonian PDE on the phase space of the focusing modified Korteweg-de Vries (\mkdv) equation
\begin{equation}\label{eq:mkdvfocusing}
    v_t + v_{xxx} + 6v^2v_x = 0, \quad x\in\T,t\in \R,\; v(x,t)\in\R.
 \end{equation}
with Hamiltonian in the corresponding Poisson  algebra of this equation. A similar relation is established between the sinh Gordon equation \eqref{eq:sinhGordon} and the defocusing \mkdv equation
\begin{equation}\label{eq:mkdvdefocusing}
    v_t + v_{xxx} - 6v^2v_x = 0, \quad x\in\T,t\in \R,\; v(x,t)\in\R.
 \end{equation}
Let us first state our results for the sinh-Gordon equation, which are somewhat easier to formulate than the ones for the sine-Gordon equation. Denote $H^m\equiv H^m(\T,\R)$, $m\in\Z_{\geq0}$, the standard Sobolev spaces of real valued function defined on $\T$. Note that any $C^1-$solution $\R\to H^1, t\mapsto u(t,\cdot)$   of \eqref{eq:sinhGordon} satisfies
\[
 \int_\T\sinh(u)\dx=\partial_t \int_\T\partial_x u\dx=0.
\]
This identity suggests to introduce the scale of phase spaces for \eqref{eq:sinhGordon} 
\[
\Msinh=\left\{\, u\in H^{m}\colon\int_\T \sinh(u)\dx=0\,\right\}, \quad m\geq 0.
\]
Note that \Msinh is a real analytic codimension 1 submanifold of $H^{m}$. 
An element of  \\ $H_0^{m+1}=\left\{\; u\in H^{m+1}\; :\; \int_\T u\dx = 0 \; \right\}$ can be written in the form $2\dxinv v$ where $v\in H_0^m$ and $\dxinv$ denotes the mean zero antiderivative.
Furthermore for any $C^1$- functionals $F,G:H^m\to \R$ with $L^2-$gradients in $H^1$, denote by $\{F,G\}$ the Gardner bracket
\[
 \{F,G\}(v) =\int_0^1 \partial_v F\partial_x\partial_v G\dx.
\]
Finally we need to introduce the notion of Poisson algebra of an integrable PDE such as the defocusing \mkdv equation. It is well known that equation \eqref{eq:mkdvdefocusing} admits a Lax pair formulation
\[
	\partial_tL_M = [L_M,B_M].
\]
We say that a Hamiltonian $H$ is in the Poisson algebra of equation \eqref{eq:mkdvdefocusing} if the periodic and anti-periodic spectra of $L_M$ are constants of motion under the flow of $H$.
Our first main result is the following.
\begin{thm}\label{thm:Sinh}
 For any $m\geq 0$, the following holds:
\begin{thmenum}
\item \label{prop:cforSinh}
 The map
 \begin{align*}
 \Fsinh:H_0^{m}&\to \Msinh[m+1],v\mapsto u(v)=2\dxinv v +\Csinh(v) &
  \Csinh(v):=-\atanh\pfrac{\int_\T\sinh(2\dxinv v)\dx}{\int_\T \cosh(2\dxinv v)\dx},
 \end{align*}
is a real analytic diffeomorphism and transforms equation \eqref{eq:sinhGordon}  into
\begin{equation}\label{eq:dxSinh} 
	  v_t=\frac12 \sinh(2\partial_x^{-1}v+\Csinh(v))
\end{equation}
on the space $H_0^m$ where the vector field $\frac12 \sinh(2\partial_x^{-1}v+\Csinh(v))$ is in $H_0^{m+1}$.
 \item  Equation \eqref{eq:dxSinh} is a Hamiltonian PDE with respect to the Gardner bracket with Hamiltonian
 \begin{equation}\label{H:sinh}
 \Hsinh:H_0^m\to\R,v\mapsto \Hsinh(v)= -\frac14\int_\T\cosh (2\dxinv v+\Csinh(v))\dx.
\end{equation}
\item The Hamiltonian \Hsinh is in the Poisson algebra of the defocusing \mkdv equation and hence \eqref{eq:dxSinh} is an integrable PDE on $H^m_0$.
\end{thmenum}
\end{thm}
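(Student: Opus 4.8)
The plan is to prove the three parts of Theorem~\ref{thm:Sinh} in order, since each builds on the previous one.

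\paragraph{Part (i).}
First I would verify that $\Fsinh$ is well-defined, i.e.\ that $u(v)=2\dxinv v+\Csinh(v)$ indeed satisfies $\int_\T\sinh(u(v))\dx=0$ and lies in $H^{m+1}$. The $H^{m+1}$-regularity is immediate since $\dxinv v\in H^{m+1}_0$ for $v\in H^m_0$ and $\Csinh(v)$ is a constant; the integral constraint is a direct computation using the addition formula $\sinh(a+c)=\sinh a\cosh c+\cosh a\sinh c$ together with the definition of $\Csinh$ as an $\atanh$ of the appropriate ratio (one must check that ratio has absolute value $<1$, which follows from $|\sinh t|<\cosh t$ pointwise). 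Real analyticity of $\Csinh$ follows from real analyticity of $\sinh,\cosh,\atanh$ and of $v\mapsto\dxinv v$. For the inverse, given $u\in\Msinh[m+1]$ set $v=\tfrac12 u_x\in H^m_0$ (mean zero since $u\in H^{m+1}$); one checks $\Fsinh(\tfrac12 u_x)=u$ by showing the additive constant is forced: $u-2\dxinv(\tfrac12 u_x)$ is the mean $\bar u$ of $u$, and the constraint $\int_\T\sinh(u)\dx=0$ pins down $\bar u$ to be exactly $\Csinh(\tfrac12 u_x)$, again via the addition formula. So $\Fsinh$ is a real analytic bijection with real analytic inverse $u\mapsto\tfrac12 u_x$. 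Finally, to transform the equation: differentiate $u=2\dxinv v+\Csinh(v)$ in $t$, so $u_t=2\dxinv v_t+\partial_t\Csinh(v)$, hence $u_{xt}=2v_t$; and $\sin$ is replaced by $\sinh$, so $u_{xt}=\sinh(u)$ becomes $2v_t=\sinh(2\dxinv v+\Csinh(v))$, which is \eqref{eq:dxSinh}. That the vector field lands in $H^{m+1}_0$: it is $\partial_x$ of something (namely $\partial_x u$ up to the factor), hence mean zero, and composition of $\sinh$ with an $H^{m+1}$ function stays in $H^{m+1}$.

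\paragraph{Part (ii).}
I would compute the $L^2$-gradient $\partial_v\Hsinh$ and check that $\partial_x\partial_v\Hsinh$ equals the right-hand side of \eqref{eq:dxSinh}. Writing $w(v):=2\dxinv v+\Csinh(v)$, the variation of $\Hsinh(v)=-\tfrac14\int_\T\cosh(w)\dx$ in direction $h$ is $-\tfrac14\int_\T\sinh(w)\,\big(2\dxinv h+d\Csinh(v)[h]\big)\dx$. The $d\Csinh$-term contributes $d\Csinh(v)[h]\cdot\big(-\tfrac14\int_\T\sinh(w)\dx\big)=0$ because $\int_\T\sinh(w)\dx=0$ on $\Msinh[m+1]$ — this is exactly why the constant $\Csinh$ was chosen this way, and it is the small but crucial point. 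Hence the variation is $-\tfrac12\int_\T\sinh(w)\,\dxinv h\dx=\tfrac12\int_\T(\dxinv\sinh(w))\,h\dx$ after integration by parts (using that $\sinh(w)$ has mean zero so $\dxinv\sinh(w)$ is well-defined), giving $\partial_v\Hsinh=\tfrac12\dxinv\sinh(w)$. Then $\partial_x\partial_v\Hsinh=\tfrac12\sinh(w)$, which is precisely the vector field in \eqref{eq:dxSinh}; equivalently $\{\,\cdot\,,\Hsinh\}$ generates that flow.

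\paragraph{Part (iii).}
This is the substantive part and the main obstacle. I would invoke the known Birkhoff/integrability theory of defocusing \mkdv on $H^m_0$ (its Lax operator $L_M$, periodic and anti-periodic spectra, and the fact that its Poisson algebra is large enough to contain the mKdV Hamiltonians and their generating function). The task is to show $\Hsinh$ belongs to this Poisson algebra, i.e.\ that $\{\Hsinh, \lambda_j\}=0$ for every periodic/anti-periodic eigenvalue $\lambda_j$ of $L_M$. The cleanest route is to exhibit $\Hsinh$ as (a function of) elements already in the Poisson algebra of defocusing \mkdv. I would do this via the Lax/zero-curvature structure of sinh-Gordon: the sinh-Gordon equation has its own Lax pair, and its spectral curve (through the associated $2\times2$ AKNS-type system) is related to that of \mkdv by the standard Miura/Pöschl correspondence, so that the conserved quantities of sinh-Gordon — including $\Hsinh$ itself, which is essentially the first nontrivial Hamiltonian in the hierarchy — are expressed through the \mkdv periodic spectrum. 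Concretely, one shows the flow \eqref{eq:dxSinh} commutes with the defocusing \mkdv flow (this is a direct, if tedious, zero-curvature computation, or can be cited from the theory of the sinh-Gordon hierarchy), which by the characterization of the Poisson algebra forces the periodic and anti-periodic spectra of $L_M$ to be constants of motion for \eqref{eq:dxSinh}. Once $\Hsinh$ is in the Poisson algebra, integrability of \eqref{eq:dxSinh} on $H^m_0$ follows from the existence of global Birkhoff coordinates for defocusing \mkdv: these same coordinates linearize the $\Hsinh$-flow since $\Hsinh$ is then a function of the actions alone. The hard part is precisely establishing the commutation/spectral-invariance rigorously in the periodic setting with the correct function spaces; I expect to reduce it to a careful comparison of the two Lax operators rather than a bare-hands Poisson-bracket computation.
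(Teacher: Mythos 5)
Your proposal follows essentially the same route as the paper: parts (i) and (ii) are proved identically (well-definedness of $\Csinh$ via $|\sinh|<\cosh$ and the addition theorem, surjectivity by uniqueness of the constant, and the gradient computation in which the $d\Csinh$-term drops out because $\int_\T\sinh(2\dxinv v+\Csinh(v))\dx=0$), and for part (iii) the paper likewise transfers spectral invariance from the sine/sinh-Gordon Lax operator to $L_M$ — concretely by chaining Chodos's identity $\Dl_\kdv(U_{xx}+U_x^2,\lm^2)=\Dl_{SG}(U,\lm)$ with $e^{\int_\T V\dx}\Dl_M(V,\lm)=\Dl_\kdv(V_x+V^2,\lm)$ from \cite{kappeler2008mkdv}, which is exactly the ``comparison of the two Lax operators'' you defer to. The only substantive difference is that you leave this discriminant identity unspecified, whereas it is the one concrete ingredient that makes part (iii) go through.
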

Theorem \ref{thm:Sinh} combined with the result of \cite{kappeler2008mkdv} on global Birkhoff coordinates for the defocusing \mkdv equation leads to the following application.
\begin{cor}\label{cor:BirkhoffSinh} 
For any $m\geq 0$, equation \eqref{eq:dxSinh} admits global Birkhoff coordinates. In particular, the Hamiltonian \Hsinh is a function of the corresponding actions alone. 
\end{cor}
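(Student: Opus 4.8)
\medskip
\noindent\emph{Proof idea.}
The plan is to transport the global Birkhoff coordinates of the defocusing \mkdv equation, provided by \cite{kappeler2008mkdv}, to equation \eqref{eq:dxSinh} by the very same map, and then to use Theorem~\ref{thm:Sinh}\,(iii) to see that in these coordinates $\Hsinh$ depends on the actions only. First I would recall the output of \cite{kappeler2008mkdv}: for every $m\geq 0$ there is a real analytic diffeomorphism $\Phi$ of $H_0^m$ onto a weighted sequence space with coordinates $(x_n,y_n)_{n\geq1}$, canonical for the Gardner bracket (it transforms $\{\cdot,\cdot\}$ into the standard bracket $\{x_m,y_n\}=\delta_{mn}$, all other brackets vanishing), under which every periodic and anti-periodic eigenvalue of the \mkdv Lax operator $L_M$ --- hence every spectral invariant --- becomes a function of the actions $I_n:=\tfrac12(x_n^2+y_n^2)$ alone. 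Since, by Theorem~\ref{thm:Sinh}\,(ii), equation \eqref{eq:dxSinh} is a Hamiltonian PDE on this very phase space $H_0^m$ for the Gardner bracket, $\Phi$ carries it to a Hamiltonian system on the sequence space with Hamiltonian $K:=\Hsinh\circ\Phi^{-1}$, and, $\Phi$ being a global diffeomorphism, $(x_n,y_n)_{n\geq1}$ are already \emph{global} coordinates.

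It remains to show that $K$ is a function of the actions, and here I would invoke Theorem~\ref{thm:Sinh}\,(iii). Because $\Hsinh$ lies in the Poisson algebra of \eqref{eq:mkdvdefocusing}, the periodic and anti-periodic spectra of $L_M$ are constants of motion along the flow of $\Hsinh$; hence so is each action $I_n$, being a function of these spectra, so that $\{\Hsinh,I_n\}=0$ for every $n$, i.e.\ $\{K,I_n\}=0$ after transport by $\Phi$. The Hamiltonian flow of $I_n$ is a rotation in the $(x_n,y_n)$-plane fixing all other coordinates, so $K$ is invariant under each of these commuting rotations; a real analytic function with this invariance is a function of the radii $x_n^2+y_n^2$, i.e.\ of the actions $(I_n)_n$. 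Thus $\Phi$ is a global Birkhoff map for \eqref{eq:dxSinh}, $\Hsinh=\widehat H\big((I_n)_n\big)$ for some function $\widehat H$, and in these coordinates the flow is linear --- a rotation of frequency $\omega_n=\partial_{I_n}\widehat H$ in each $(x_n,y_n)$-plane --- which is the claimed integrability; composing $\Phi$ with $\Fsinh$ of part (i) yields the corresponding statement for the sinh-Gordon variable $u$.

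I expect the only genuinely delicate step to be the passage, in the second paragraph, from ``the periodic and anti-periodic spectra are constants of motion along the $\Hsinh$-flow'' to the honest Poisson identity $\{\Hsinh,I_n\}=0$: this uses local-in-time existence of the $\Hsinh$-flow on $H_0^m$ (which follows from the smoothness of the vector field in \eqref{eq:dxSinh}, cf.\ Theorem~\ref{thm:Sinh}), the fact that each $I_n$ is a differentiable functional whose $L^2$-gradient lies in the space for which the Gardner bracket is defined, and a chain rule relating $\{\Hsinh,I_n\}$ to the brackets of $\Hsinh$ with the spectral coordinates --- all of which belong to the standard analytic framework for the \mkdv hierarchy recalled in \cite{kappeler2008mkdv} and need only be quoted. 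Alternatively one sidesteps these points by arguing entirely on the sequence space, where each $I_n$ is a polynomial in two real variables and the bracket is the naive one, so that $\{K,I_n\}=\frac{d}{dt}\big|_{0}\big(I_n\circ(\Hsinh\text{-flow})\big)=0$ is immediate once local existence of the transported flow is known. What is left is the routine check that the phase space $H_0^m$ and the symplectic structure attached to \eqref{eq:dxSinh} in Theorem~\ref{thm:Sinh} are literally those of \cite{kappeler2008mkdv}.
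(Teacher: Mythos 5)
Your proposal is correct and follows essentially the same route as the paper: invoke the global Birkhoff map for defocusing \mkdv from \cite{kappeler2008mkdv}, transport equation \eqref{eq:dxSinh} by it, and use Theorem~\ref{thm:Sinh}\,(iii) (the spectra of $L_M$ are invariant under the $\Hsinh$-flow, hence so are the actions) to conclude that $\Hsinh$ Poisson-commutes with the actions and is therefore a function of them alone. The paper states this conclusion more tersely (``hence shares the same Birkhoff map''), while you spell out the rotation-invariance argument and the restriction to the Casimir level set $H_0^m$, which the paper handles by noting that the mean value is a Casimir of the Gardner bracket.
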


Our second main result concerns  equation \eqref{eq:sinGordon}. 
It is customary to consider solutions $u$ of \eqref{eq:sinGordon} of the form $u(x+1)=u(x)+2\pi k$. The integer $k$ is referred to as  topological charge (see e.g. \cite{NovikovGrinevich2003topological}). We identify an element $u\in H^{m}$ with the one periodic function on $\R$ $x\mapsto u(x\pmod{2\pi})$ and define the affine spaces
\[
	 \Leave[m][k] =2\pi k x + H_0^{m} + \T_{2\pi}= \left\{ \; u = 2\pi kx + \mathring{u}\;:\; \mathring{u}\in H_0^m + \T_{2\pi}\;\right\}
\] 
where $\T_{2\pi}= \R/2\pi\Z$.
Note that for $u\in \Leave[m][k]$, $m\geq 1$ one has that $\sin(u),\cos(u)\in H^m$ are well defined as well as $\partial_x u = \partial_x\mathring{u}+2\pi k\in H^{m-1}$ where $\partial_x\mathring{u}\in H_0^{m-1}$. Finally, since for $m\geq 1$, the sets \Leave[m][k], $k\in\Z$ are pairwise disjoint, for any solution $u$ of \eqref{eq:sinGordon} the topological charge is a constant of motion and hence it is defined by the initial data. For this reason we fix $k\in\Z$ and drop the index $k$ in the sequel. Note that for  $u = \mathring{u}+2\pi kx \in C^1(\R,\Leave)$, with $\mathring{u}\in C^1(\R,H^m)$ one has for all $t\in\R$ $\partial_t u= \partial_t\mathring{u} +\partial_t2\pi kx =\partial_t\mathring{u} \in H^m$ and hence
\[
	\partial_t\partial_x u=\partial_t\partial_x\mathring{u}=\partial_x\partial_t\mathring{u}
	=\partial_x\left(\partial_t\mathring{u}+\partial_t\left(2\pi kx\right)\right)
	=\partial_x\partial_t u \text{ in } H_0^{m-1}.
\]
This implies that for any $C^1-$ solution {$u:\R\to \Leave$} of \eqref{eq:sinGordon} one has
\[
 \int_\T\sin(u)\dx=\int_\T u_{xt}\dx=\partial_t \int_\T u_x = \partial_t 2\pi k = 0,
\]
hence as a scale of phase spaces of \eqref{eq:sinGordon} we chose
\[
   \Msin=\left\{\, u\in \Leave\colon \int_\T \sin(u)\dx=0\,\right\}.
\]
Note that \Msin is a codimension 1 submanifold of \Leave. Indeed, the $L^2$-gradient of $\int_\T \sin(u) \dx$ is $\cos(u)$ which vanishes identically iff $u\equiv \pi/2$ or $u\equiv 3\pi/2$, implying that on \Msin the $L^2$-gradient of $\int_\T\sin(u)\dx$ doesn't vanish.
For any $u\in \Msin$  one has  $\int_\T \sin(u+c)\dx=0$ for $c = 0,\pi \pmod{2\pi}$. If in addition $\int_\T \cos( u)\dx=0$ then $\int_\T \sin( u+c)\dx=0$ for any  $c\in\T_{2\pi}$. We split $\Msin$ into three parts defined as follows
\[
 \Sing=\left\{\;u\in \Leave\colon \int_\T \sin(u)\dx=0, \; \int_\T \cos(u)\dx=0\;\right\}
\]
and
\[
 \Ppm=\left\{\; u\in \Leave \colon \int_\T \sin(u)\dx=0,\;  \pm\int_\T \cos(u)\dx > 0\;\right\}.
\]
\noindent Note that for $k=0$, $u\equiv 0$ is in \Pp and $u\equiv \pi$ is in \Pm and no other constant function is in \Msin for any $k\in\Z$. Furthermore for any $u\in \Leave$, either $\int_\T \sin(u)\dx=0=\int_\T \cos(u)\dx$ in which case $u+c\in\Sing$ for all $c\in\T_{2\pi}$, or at least one of the integrals doesn't vanish in which case there is exactly one $c_+\in \T_{2\pi}$ such that $u+c_+\in\Pp$ and exactly one $c_-\in \T_{2\pi}$ such that $u+c_-\in\Pm$. Note that $c_+$ and $c_-$ are the two solutions $\pmod{2\pi}$ of $\tan(c)=\frac{\int_\T\sin(u)\dx}{\int_\T\cos(u)\dx}$.
Furthermore for $\int_\T \exp(\ii u)\dx \not=0$ one has $\tan(- \arg\left(\int_\T \exp(\ii u)\right)) = \frac{\int_\T\sin(u)\dx}{\int_\T\cos(u)\dx}\pmod{2\pi}$ and $u - \arg\left(\int_\T \exp(\ii u)\right)\in\Pp$, also $\int_\T \exp(\ii u)\dx =0$ iff $\int_\T \sin(u)\dx=0=\int_\T \cos(u)\dx$. This is discussed in more detail in Proposition \ref{prop:separation}.

Now let $l\geq 0$ then for any $u\in \Msin[l+1]$, $u_x/2\in \{\; \mathring{v}+k\pi \in H^{l}\;:\; \mathring{v}\in H^{l}_0\;\}$. Hence in order to parametrize \Pp[l+1] and \Pm[l+1] we extend $\dxinv$ to functions of the form $v = \mathring{v}+k\pi$, where $\mathring{v}\in H^{l}_0$,  by $\dxinv v= \dxinv \mathring{v}+k\pi x\in \Leave$ and introduce, for $l\geq 0$, the spaces
\[
    \Wp=\left\{\;v\in H^{l}: \int_\T \exp(2\ii\dxinv v)\dx\not=0 \text{ and } \int_\T v\dx = k\pi \;\right\}.
\]                                                                                   
The space \Wp is open in the affine space $H_0^l + k\pi$.
Furthermore, for any $v\in H^{l}$ define $\breve{v} \in H^l$ by $\breve{v}(x):=v(-x)$ and note that $\dxinv \breve{v} =-\left(\dxinv v\right)\spbreve$.
The Poisson algebra of equation \eqref{eq:mkdvfocusing} is defined in an analoguous way as the one for equation \eqref{eq:mkdvdefocusing}.

\begin{thm}\label{thm:sin}
For any $l\geq 0$, the following holds:
 \begin{thmenum}  
\item \label{prop:cForSin}
  The maps 
  \[
   \Fsinp:\Wp \to \Pp[l+1], \;v\mapsto 2\dxinv v+\Csin(v), 
   \quad \Csin(v):= -\arg\left(\int_\T \exp(2\ii \dxinv v)\dx\right)
  \]
and
\[
   \Fsinm:\Wp \to \Pm[l+1], \; v\mapsto \left(2\dxinv v\right)\spbreve +\Csin(v) + \pi
  \]
are real analytic parametrizations of \Pp[l+1] and \Pm[l+1] respectively. Furthermore they both transform equation \eqref{eq:sinGordon} into 
 \begin{equation} \label{eq:dxSin}
  v_t=\frac12\sin(2\partial_x^{-1}v+\Csin(v))
\end{equation}
on the space \Wp with vectorfield $\frac12 \sin(2\partial_x^{-1}v+\Csinh(v))\in H_0^{l+1}$.
  \item Equation \eqref{eq:dxSin} is a Hamiltonian PDE with respect to the Gardner bracket with Hamiltonian
  \begin{equation}\label{H:sin} 
	   \Hsin : \Wp\to \R, v\mapsto \frac14\int_\T\cos (2\dxinv v+\Csin(v))\dx.  
  \end{equation}
\item The Hamiltonian $\Hsin$, defined by \eqref{H:sin}, is in the Poisson algebra of the focusing \mkdv equation and hence \eqref{eq:dxSin} is an integrable PDE on \Wp.
 \end{thmenum}
\end{thm}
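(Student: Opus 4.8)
The proof follows the same three-step pattern as Theorem~\ref{thm:Sinh}, with $\sin$ in place of $\sinh$ and with the focusing rather than the defocusing spectral problem; I indicate only the genuinely new points, the main difficulty being step~(iii).

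For (i) the plan is to check directly that $\Fsinp$ is a real analytic bijection $\Wp\to\Pp[l+1]$ with inverse $u\mapsto u_x/2$, much as in Proposition~\ref{prop:separation}. For $v\in\Wp$ one has $2\dxinv v\in\Leave[l+1]$ and $\Csin(v)\in\T_{2\pi}$, so $u:=2\dxinv v+\Csin(v)\in\Leave[l+1]$, and by the choice of $\Csin(v)$ one gets $\int_\T e^{\ii u}\dx=e^{\ii\Csin(v)}\int_\T e^{2\ii\dxinv v}\dx=\bigl|\int_\T e^{2\ii\dxinv v}\dx\bigr|>0$, whence $\int_\T\sin u\dx=0$ and $\int_\T\cos u\dx>0$, i.e.\ $u\in\Pp[l+1]$. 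Conversely, for $u\in\Pp[l+1]$ one sets $v=u_x/2$, checks $\int_\T v\dx=k\pi$, and reads off from $\int_\T e^{\ii u}\dx\in\R_{>0}$ both that $\int_\T e^{2\ii\dxinv v}\dx\ne0$ and that the constant $c$ with $u=2\dxinv v+c$ satisfies $c\equiv\Csin(v)\pmod{2\pi}$; hence $\Fsinp(u_x/2)=u$. Real analyticity is then immediate, the only non-polynomial ingredient being $v\mapsto\arg\int_\T e^{2\ii\dxinv v}\dx$, which is real analytic on $\Wp$ since that integral never vanishes there. For $\Fsinm$ nothing more need be computed: with $u=\Fsinp(v)$ one has $\Fsinm(v)=\breve u+\pi$, and $u(x)\mapsto u(-x)+\pi$ is a real analytic involution carrying $\Pp[l+1]$ onto $\Pm[l+1]$ and solutions of \eqref{eq:sinGordon} to solutions of \eqref{eq:sinGordon}, because $\partial_x\partial_t\bigl(u(-x,t)+\pi\bigr)=-\sin u(-x,t)=\sin\bigl(u(-x,t)+\pi\bigr)$; so $\Fsinm$ is a real analytic parametrization of $\Pm[l+1]$. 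Finally both maps turn \eqref{eq:sinGordon} into \eqref{eq:dxSin}: if $u$ solves \eqref{eq:sinGordon} in $\Pp[l+1]$ then $v=u_x/2$ satisfies $v_t=u_{xt}/2=\tfrac12\sin u=\tfrac12\sin(2\dxinv v+\Csin(v))$ because $u=\Fsinp(v)$; conversely if $v$ solves \eqref{eq:dxSin} then $u:=\Fsinp(v)$ stays in $\Pp[l+1]$ for short time (the quantity $\int_\T e^{2\ii\dxinv v(t)}\dx$ is continuous in $t$ and nonzero at $t=0$), and since $\Csin(v(t))$ is $x$-independent one gets $u_{xt}=\partial_x u_t=\partial_x(2\dxinv v_t)=2v_t=\sin u$. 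The field $\tfrac12\sin(2\dxinv v+\Csin(v))$ has mean zero, hence lies in $H_0^{l+1}$, and is a real analytic map $H^l\to H^{l+1}$, so \eqref{eq:dxSin} is even locally well posed by the Banach space Picard theorem.

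For (ii) one differentiates $\Hsin(v)=\tfrac14\int_\T\cos u\dx$ with $u=2\dxinv v+\Csin(v)$: the variation of the $x$-independent term $\Csin(v)$ contributes only a scalar multiple of $\int_\T\sin u\dx$, which vanishes on $\Pp[l+1]$, so $\partial_v\Hsin\cdot\delta v=-\tfrac12\int_\T\sin u\,\dxinv\delta v\dx$; an integration by parts---legitimate because $\sin u$ has mean zero, so $\dxinv\sin u$ is periodic---turns this into $\tfrac12\int_\T(\dxinv\sin u)\,\delta v\dx$. Hence the $L^2$-gradient is $\partial_v\Hsin=\tfrac12\dxinv\sin u\in H^{l+2}\subset H^1$, so the Gardner bracket makes sense, and the associated Hamiltonian vector field is $\partial_x\partial_v\Hsin=\tfrac12\sin u$, which is exactly the right-hand side of \eqref{eq:dxSin}.

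For (iii) the plan is to invoke the zero-curvature (Lax pair) representation of the sine-Gordon equation. Written in the variable $v=u_x/2$, its spatial operator coincides---after a trivial gauge transformation and a linear reparametrization of the spectral parameter---with the focusing \mkdv Lax operator $L_M$ with potential $v$; its temporal operator $W(x,t,\lambda)$ is assembled from $\cos u$ and $\sin u$ and is $1$-periodic in $x$ because $u(x+1)=u(x)+2\pi k$; and the zero-curvature equation for this pair is equivalent to $u_{xt}=\sin u$, i.e.\ by (i) to \eqref{eq:dxSin}. It follows that along the flow of \eqref{eq:dxSin} the monodromy matrix of $L_M$ evolves by conjugation for $\lambda\ne0$, so its trace $\Delta(\lambda)$---an entire function of $\lambda$---is $t$-independent for $\lambda\ne0$ and hence, by analyticity, for all $\lambda$, the $\lambda^{-1}$-singularity of $W$ at the origin doing no harm. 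Therefore the periodic and anti-periodic spectra $\{\lambda:\Delta(\lambda)=2\}$ and $\{\lambda:\Delta(\lambda)=-2\}$ of $L_M$ are constants of motion under the Hamiltonian flow of $\Hsin$, which is exactly the statement that $\Hsin$ lies in the Poisson algebra of the focusing \mkdv equation; integrability of \eqref{eq:dxSin} on $\Wp$ then follows, since this Poisson algebra supplies a complete family of commuting integrals and, as for the sinh-Gordon equation, $\Hsin$ is in the end a function of the \mkdv actions alone. I expect the crux of the whole argument to be precisely this identification---matching the spatial operator of the sine-Gordon Lax pair, in the variable $v=u_x/2$, with the focusing \mkdv operator $L_M$ in exactly the normalization used for the \mkdv spectral theory, reconciling the spectral parameter, the gauge, and above all the sign pattern separating the focusing from the defocusing reduction---together with checking that the singularity of $W$ at $\lambda=0$ causes no trouble; steps (i) and (ii) are essentially a transcription of the corresponding computations for the sinh-Gordon equation in Theorem~\ref{thm:Sinh}.
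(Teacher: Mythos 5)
Your parts (i) and (ii) follow the paper's proof essentially verbatim: the characterization of \Pp[l+1] via $\int_\T e^{\ii u}\dx\in\R_{>0}$, the inverse $u\mapsto u_x/2$, the analyticity of \Csin via the non-vanishing of $\int_\T e^{2\ii\dxinv v}\dx$ (the paper phrases it through the implicit function theorem), the passage from \Fsinp to \Fsinm by the involution $u\mapsto\breve u+\pi$ (cf.\ Proposition \ref{prop:separation}), and the integration by parts in which the $\partial_v\Csin$ term is killed by $\int_\T\sin(2\dxinv v+\Csin(v))\dx=0$ are exactly the arguments given there.

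The gap is in (iii), and you have located it yourself: the identification of the sine-Gordon spatial operator, in the variable $v=u_x/2$, with the operator $L_M$ of \cite{kappeler2008mkdv} is precisely what you do not carry out, and your stated expectation of its form is not right. The paper does not match the two operators by a gauge transformation at all; it composes two published discriminant identities, Chodos's relation $\Dl_\kdv(U_{xx}+U_x^2,\lm^2)=\Dl_{SG}(U,\lm)$ and the relation $e^{\int_\T V\dx}\Dl_M(V,\lm)=\Dl_\kdv(V_x+V^2,\lm)$ from \cite{kappeler2008mkdv}, to obtain $e^{\int_\T U_x\dx}\Dl_M(U_x,\lm^2)=\Dl_{SG}(U,\lm)$. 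Note that the spectral parameter is reparametrized quadratically, $\lm\mapsto\lm^2$, not linearly, and the identity carries the multiplicative factor $e^{\int_\T U_x\dx}$; so $L_M$ is not the sine-Gordon spatial operator in another gauge, and a ``trivial gauge transformation plus linear reparametrization'' would not yield the correct correspondence of periodic and antiperiodic spectra. A second point you pass over is that the focusing case is reached by complexification: the paper writes the complex sine-Gordon equation as a Hamiltonian PDE for $V=\partial_x U$ with Hamiltonian $H_{SG}(V)=\int_\T\cosh(2\dxinv V+C(V))\dx$ on a complex neighbourhood, and identifies \Hsin as the restriction of $H_{SG}$ to the invariant subspace $\ii\Wp\subset H^m(\T,\ii\R)$, i.e.\ to the focusing \mkdv phase space restricted to mean value $k\pi$ (symplectic because the mean is a Casimir). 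Without this step the ``sign pattern separating focusing from defocusing'' you flag remains unreconciled; moreover, since Birkhoff coordinates for the focusing \mkdv exist only near $0$, your closing claim that \Hsin is a function of the actions alone on all of \Wp overstates what is proved -- Corollary \ref{cor:BirkhoffSin} is local.
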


Theorem \ref{thm:sin} combined with results from \cite{kappeler2008mkdv} yields in particular the following.
\begin{cor}\label{cor:BirkhoffSin} 
For any $l\geq 0$, equation \eqref{eq:dxSin} admits Birkhoff coordinates around 0. In particular, near 0 the Hamiltonian $\Hsin$ is a  function of the actions alone. 
\end{cor}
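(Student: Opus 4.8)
The plan is to obtain Birkhoff coordinates for equation \eqref{eq:dxSin} by restricting those of the focusing \mkdv equation. By \cite{kappeler2008mkdv} the focusing \mkdv equation possesses, on an open neighbourhood of the zero potential in $H_0^l$, a real analytic coordinate map $\Phi$ onto an open neighbourhood of $0$ in a weighted sequence space, canonical with respect to the Gardner bracket, which puts the periodic and anti-periodic spectra of the associated Lax operator $L_M$ --- and in particular the action variables $I_n$, each a real analytic function of that spectrum (equivalently, of the discriminant of $L_M$) --- into normal form; the focusing \mkdv Hamiltonian becomes a function of the $I_n=\tfrac12(x_n^2+y_n^2)$ alone. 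Since $0\in\Wp$ (which holds precisely when $k=0$, the case relevant here: indeed $\int_\T 0\dx = 0 = k\pi$ and $\int_\T\exp(2\ii\dxinv 0)\dx = 1\neq 0$) and \Wp is open in $H_0^l + k\pi$, one may shrink the domain of $\Phi$ to an open set $\mathcal V\subseteq\Wp$ containing $0$, on which everything above remains valid.

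By Theorem \ref{thm:sin}(ii) equation \eqref{eq:dxSin} is Hamiltonian on \Wp with respect to the Gardner bracket, so $\Phi|_{\mathcal V}$ --- being canonical for exactly this bracket --- constitutes a set of Birkhoff coordinates for \eqref{eq:dxSin} as soon as one checks that the Hamiltonian \Hsin is a function of the actions, and this is where Theorem \ref{thm:sin}(iii) enters. That part asserts that \Hsin lies in the Poisson algebra of the focusing \mkdv equation, i.e. that every periodic and anti-periodic eigenvalue of $L_M$ is a constant of motion for the flow of \Hsin and hence Poisson-commutes with it. As each action $I_n$ is built real analytically from these eigenvalues, it follows that $\{\Hsin, I_n\} = 0$ on $\mathcal V$ for all $n$. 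Transported through $\Phi$, this says that $\Hsin\circ\Phi^{-1}$ is invariant under the Hamiltonian flow of every $I_n$, that is, under the rotation in each $(x_n,y_n)$-plane; since these commuting rotations are periodic and their joint orbit closures are the coordinate tori, $\Hsin\circ\Phi^{-1}$ depends on the actions $(I_n)_n$ alone. This gives both assertions of the corollary.

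Several points need care, though they are routine within the framework of \cite{kappeler2008mkdv}. One must verify that the Gardner bracket used here and the Poisson structure underlying the \mkdv Birkhoff map agree on $H_0^l$, and that \Hsin and the $I_n$ are $C^1$ with $L^2$-gradients in $H^1$, so that all brackets are well defined and the ``invariance under the torus action $\Rightarrow$ function of the actions'' step is legitimate in the infinite-dimensional real analytic category; one should also record that the $I_n$ extend real analytically to $\mathcal V$ and vanish to second order at $0$, so that the conclusion is meaningful in a full neighbourhood of the equilibrium. The genuine obstruction --- and the reason the statement is only local, in contrast with the global Corollary \ref{cor:BirkhoffSinh} for the defocusing equation --- is that the focusing Zakharov--Shabat operator $L_M$ is not self-adjoint: its spectrum stays on the real axis with no complex instabilities, and the action variables remain well defined and analytic, only on a neighbourhood of the zero potential, which is precisely why $\mathcal V$ cannot in general be enlarged to all of \Wp.
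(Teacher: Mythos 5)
Your argument follows essentially the paper's route: the Birkhoff map of the focusing \mkdv near $0$ together with Theorem \ref{thm:sin}$(iii)$ (\Hsin lies in the focusing \mkdv Poisson algebra, so it commutes with the actions, which are built from the periodic and anti-periodic spectra of $L_M$) yields the corollary, and your torus-invariance step merely makes explicit what the paper leaves implicit. The one point to adjust is the attribution at the start: \cite{kappeler2008mkdv} constructs the Birkhoff map for the \emph{defocusing} equation, and the paper obtains the focusing map near $0$ by complexifying that real analytic map and restricting to the invariant subspace $H^m(\T,\ii\R)$ --- a short but genuine step (it is also the real source of the locality you correctly identify) that should not simply be cited away.
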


Furthermore, it turns out  that \Msin is a global fold. Taking into account that $\int_\T\cos(u+\pi)\dx = - \int_\T\cos(u)\dx$ one can prove the following.
\begin{prop}\label{prop:separation}
For any $m\geq 1$, \Sing is a real analytic codimension 2 submanifold of $\Leave$ and hence a codimension 1 submanifold of \Msin, furthermore the map 
\[\Tr:\Msin \to \Msin , u \mapsto \breve{u} +\pi
\]
is a bianalytic involution, mapping \Pp onto \Pm and leaving \Sing invariant which conjugates eqaution \eqref{eq:sinGordon}. 
\end{prop}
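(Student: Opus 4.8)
The plan is to reduce the proposition to two essentially elementary ingredients: a submersion property of the pair of functionals $\Phi(u)=\left(\int_\T\sin(u)\dx,\int_\T\cos(u)\dx\right)$, and the bookkeeping of the reflection $u\mapsto\breve u$ composed with the shift by $\pi$. For $m\geq1$ the map $\Phi:\Leave\to\R^2$ is real analytic (for such $m$ the functions $\sin u,\cos u\in H^m$ depend real analytically on $u\in\Leave$, and integration over $\T$ is bounded linear), and $\Sing=\Phi^{-1}(0,0)$. I would first show that $\Phi$ is a submersion at every $u\in\Sing$. At such $u$ both means $\int_\T\cos(u)\dx$ and $\int_\T\sin(u)\dx$ vanish, so the differential of $\Phi$ at $u$ applied to a tangent vector $(h,\delta)\in H_0^m\times\R$ of $\Leave$ is $\left(\int_\T\cos(u)h\dx,-\int_\T\sin(u)h\dx\right)$, independently of the $\T_{2\pi}$-component $\delta$; this is onto $\R^2$ exactly when $\cos u$ and $\sin u$ are linearly independent in $L^2(\T)$ (being mean zero, a nontrivial linear combination of them which vanishes as a functional on $H_0^m$ must be the zero constant, hence vanish). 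For the independence: since $m\geq1$ the function $u$ is continuous on $\R$, so a relation $a\cos u+b\sin u\equiv0$ with $(a,b)\neq(0,0)$ forces $e^{\ii u}$ to take at most two values on the unit circle, hence, by connectedness, to be constant; then $u$ is constant on $\R$, which forces the topological charge to be $0$ and $\int_\T e^{\ii u}\dx=e^{\ii u(0)}\neq0$, contradicting $u\in\Sing$. Thus $\Phi$ is a submersion along $\Sing$, so $\Sing$ is a closed real analytic codimension $2$ submanifold of $\Leave$; since $\Msin$ is already the codimension $1$ submanifold cut out by the first component of $\Phi$, the same linear independence shows that the restriction of the second component to $\Msin$ is a submersion near $\Sing$, whence $\Sing$ is a codimension $1$ submanifold of $\Msin$.

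Next I would treat $\Tr$. The reflection $v\mapsto\breve v$, $\breve v(x)=v(-x)$, is an isometric linear involution of $H^m$; acting on $\Leave$ by $u(x)\mapsto u(-x)$ it sends the charge-$k$ leaf to the charge-$(-k)$ leaf, and the statement is to be read with this understanding (for $k=0$ it is an involution of a single leaf). From $\int_\T\sin(\breve u+\pi)\dx=-\int_\T\sin(u)\dx$ and $\int_\T\cos(\breve u+\pi)\dx=-\int_\T\cos(u)\dx$ one reads off that the affine map $\Tr:u\mapsto\breve u+\pi$ preserves the constraint $\int_\T\sin(u)\dx=0$ and reverses the sign of $\int_\T\cos(u)\dx$; hence $\Tr$ maps $\Msin$ to $\Msin$, exchanges $\Pp$ and $\Pm$, and leaves $\Sing$ invariant (restricting there to a bianalytic involution of $\Sing$). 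It is real analytic, being affine with bounded linear part, and it squares to the identity on $\Leave$ (since $\breve{\breve u}=u$ and, in $\Leave$, $u+2\pi=u$), so it is a bianalytic involution. Finally, if $t\mapsto u(t)$ is a $C^1$ solution of \eqref{eq:sinGordon} lying in $\Msin$, then $t\mapsto\Tr(u(t))=\breve{u(t)}+\pi$ is again $C^1$ (the reflection being bounded linear) and valued in $\Msin$, and, using $\partial_x\breve u=-(\partial_x u)\spbreve$ and $(\sin u)\spbreve=\sin(\breve u)$, one computes $\partial_t\partial_x(\breve u+\pi)=-(\partial_t\partial_x u)\spbreve=-(\sin u)\spbreve=-\sin(\breve u)=\sin(\breve u+\pi)$; thus $\Tr$ conjugates \eqref{eq:sinGordon}.

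The only point that is not pure bookkeeping is the submersion property above, that is, the linear independence of $\cos u$ and $\sin u$ for $u\in\Sing$, which in turn rests on $u$ being continuous and hence on the hypothesis $m\geq1$. One should also verify the tangent-space computation on the $\T_{2\pi}$-factor of $\Leave$, but since the means of $\cos u$ and $\sin u$ vanish along $\Sing$, that factor lies in the kernel of $d\Phi$ there and does not affect the argument.
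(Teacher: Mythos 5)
Your proof is correct. For the submanifold claim it follows essentially the paper's route: the paper works with $K(u)=\int_\T e^{\ii u}\dx$ and shows that $d_uK$ takes two $\R$-linearly independent values along \Sing, which is exactly your submersion statement for $\Phi$ (the imaginary and real parts of $K$). Where you genuinely add something is the linear-independence step: the paper merely asserts that suitable $h_1,h_2$ exist (citing a dangling item ``$(v)$''), whereas you prove it, observing that a nontrivial relation $a\cos u+b\sin u\equiv 0$ forces $e^{\ii u}$ to take at most two values, hence to be constant by continuity ($m\geq 1$) and connectedness, which is incompatible with $u\in\Sing$; you also rightly note that the $\T_{2\pi}$-direction is harmless on \Sing because the means of $\cos u$ and $\sin u$ vanish there. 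For the conjugation claim your route differs from the paper's: the paper factors $\left.\Tr\right|_{\Pp}=\Fsinm\circ{\Fsinp}^{-1}$ and invokes Theorem~\ref{thm:sin}~$(i)$ (proved only later in that section), so its argument is a priori confined to $\Pp\cup\Pm$; your direct computation $\partial_t\partial_x(\breve u+\pi)=-\left(u_{xt}\right)\spbreve=-\sin(\breve u)=\sin(\breve u+\pi)$ is self-contained and valid on all of \Msin, including \Sing, which is precisely where the proposition is exploited in Theorem~\ref{thm:nonSolvabilityOnSing}. Finally, you are right to flag that $u\mapsto\breve u+\pi$ reverses the topological charge, sending \Leave[m][k] to \Leave[m][-k]; the paper suppresses this point (it is immaterial for $k=0$, the case relevant to the neighbourhood of the origin), and your reading of the statement is the appropriate one.
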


\begin{rem*}
Note that $u\equiv 0$ is in \Pp. It follows from Proposition \ref{prop:separation} that there is a neighborhood $U_0$ of $0$ in \Msin such that $U_0\subset\Pp$.
Hence by Corollary \ref{cor:BirkhoffSin} solutions of equation \eqref{eq:sinGordon} for initial value small enough are in  $C^1(\R,\Pp)$.
\end{rem*}

Proposition \ref{prop:separation} can be used to show that equation \eqref{eq:sinGordon} is {\it not} wellposed on \Sing for any $m\geq 1$ in the following sense: 
\begin{thm} \label{thm:nonSolvabilityOnSing}
There is an open  set $U$ in $\Sing[1]$ such that for any $u_0\in U$ and any $T>0$, the initial value problem
\[
\left\{ \begin{array}{l}
	u_{xt} = \sin(u) \\
	u(0)=u_0
\end{array}\right.
\]
admits no solution $u$ in $ C^1([0,T), \Msin[1])$. As a consequence $U$ has nonempty intersection with \Leave  for all $m\geq 1$. This implies that there are elements in $U$ which are $C^\infty$.
\end{thm}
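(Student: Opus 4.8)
The plan is to combine a conservation law — which traps every solution inside \Sing[1] — with a transversality obstruction showing that the sine--Gordon evolution cannot even start there.

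\emph{Step 1 (trapping in \Sing[1]).} First I would observe that for $u\in C^1\bigl([0,T),\Msin[1]\bigr)$ the constraint $\int_\T\sin u(t)\dx=0$ holds for all $t$ (it defines \Msin), so $\sin u(t)$ has mean zero, $\dxinv(\sin u)$ is defined, and the equation $u_{xt}=\sin u$ forces $u_t=\dxinv(\sin u)+c(t)$ with $c(t)=\int_\T u_t\dx\in\R$, continuous in $t$. I then claim $\int_\T\cos u(t)\dx$ is conserved: $\frac{d}{dt}\int_\T\cos u\dx=-\int_\T\sin u\,u_t\dx$, and both resulting terms vanish — $\int_\T\sin u\,\dxinv(\sin u)\dx=0$ by periodicity of $\dxinv(\sin u)$, and $c(t)\int_\T\sin u\dx=0$ by the \Msin-constraint. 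Hence if $u(0)=u_0\in\Sing[1]$ then $\int_\T\cos u(t)\dx=0$ and $\int_\T\sin u(t)\dx=0$ for all $t$, so the solution stays in \Sing[1], and in particular $t\mapsto\int_\T e^{\ii u(t)}\dx$ vanishes identically.

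\emph{Step 2 (the obstruction).} Next I would differentiate $\int_\T e^{\ii u(t)}\dx\equiv 0$ at $t=0$. Since $\int_\T e^{\ii u_0}\dx=0$ the $c(0)$-contribution drops, leaving $\int_\T e^{\ii u_0}\,\dxinv(\sin u_0)\dx=0$. On \Sing[1] both $e^{\pm\ii u_0}$ have mean zero; writing $\sin u_0=\tfrac{1}{2\ii}(e^{\ii u_0}-e^{-\ii u_0})$, one term is $\int_\T e^{\ii u_0}\dxinv(e^{\ii u_0})\dx=0$ by periodicity, and the other is evaluated by the Fourier expansion of $e^{\ii u_0}$ (whose $0$th coefficient vanishes on \Sing[1]), giving that the integral above is a nonzero multiple of the real quantity
\[
R(u):=\sum_{n\neq0}\frac1n\,\bigl|\widehat{e^{\ii u}}(n)\bigr|^2,\qquad \widehat{e^{\ii u}}(n)=\int_\T e^{\ii u}e^{-2\pi\ii nx}\dx .
\]
Thus any solution as in the theorem must satisfy $R(u_0)=0$; equivalently, for $u_0\in\Sing[1]$ with $R(u_0)\neq0$ the initial value problem has no solution in $C^1\bigl([0,T),\Msin[1]\bigr)$, for any $T>0$.

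\emph{Step 3 (the open set).} I would then set $U:=\{u_0\in\Sing[1]:R(u_0)\neq0\}$. It is open because $u\mapsto e^{\ii u}$ is continuous from $H^1$ into $L^2$ and $f\mapsto\sum_{n\neq0}|\hat f(n)|^2/n$ is a bounded quadratic form on $L^2$, so $R$ is continuous on \Sing[1]. For nonemptiness: when the fixed topological charge is $k\neq0$, the function $u_0=2\pi kx$ lies in \Sing[1] with $e^{\ii u_0}=e^{2\pi\ii kx}$, hence $R(u_0)=1/k\neq0$, and $u_0$ is already $C^\infty$; when $k=0$, one starts from the even function $\psi_0=j_{0,1}\cos(2\pi x)$ with $j_{0,1}$ the first positive zero of the Bessel function $J_0$ (so $\int_\T e^{\ii\psi_0}\dx=J_0(j_{0,1})=0$ and $\psi_0\in\Sing[1]$), notes $R(\psi_0)=0$ by symmetry, and checks that the first variation of $R$ at $\psi_0$ along the odd tangent direction $\phi=\partial_x\bigl(|\dxinv e^{\ii\psi_0}|^2\bigr)$ equals $2\pi\int_\T\phi\,\partial_x(|\dxinv e^{\ii\psi_0}|^2)\dx>0$ (using the real-analytic manifold structure of \Sing from Proposition~\ref{prop:separation}), so $R$ is not locally constant near $\psi_0$ and $U\neq\emptyset$. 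Since the smooth functions in \Sing[1] are dense and $U$ is open, $U$ meets \Leave for every $m\geq1$ and contains $C^\infty$ elements.

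\emph{Where the difficulty lies.} I expect the crux to be conceptual rather than computational: recognising that the admissible velocities of \eqref{eq:sinGordon} at $u_0\in\Sing[1]$ form the affine line $\dxinv(\sin u_0)+\R\cdot1$, that the constant direction $1$ is tangent to \Sing[1] precisely because $\int_\T e^{\ii u_0}\dx=0$ there, and hence that transversality to \Sing[1] is measured by the single scalar $R(u_0)$ — the fold structure of \Msin with ramification locus \Sing (Proposition~\ref{prop:separation}) being exactly what forbids escape from \Sing[1] into one of the open sheets \Pp, \Pm. Among the genuine computations the only mildly delicate point is the non-vanishing of $R$ in the charge-zero sector, which is handled by the first-variation argument above.
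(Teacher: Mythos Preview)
Your argument is correct and, at the level of the obstruction, coincides with the paper's. The functional you call $R(u)=\sum_{n\neq0}\lvert\widehat{e^{\ii u}}(n)\rvert^{2}/n$ is a nonzero constant multiple of the paper's $K_1(u)=\int_\T\cos u\,\dxinv\sin u\,\dx$; indeed $\int e^{\ii u}\dxinv(\sin u)\dx=K_1(u)$ because the imaginary part $\int\sin u\,\dxinv\sin u\,\dx$ vanishes identically, and your Fourier computation then gives $K_1(u)=\tfrac{1}{4\pi}R(u)$. Your Step~1 (conservation of $\int_\T\cos u\,\dx$) is true but not needed: the constraint $K_1(u_0)=0$ already follows from differentiating the single relation $\int_\T\sin u(t)\,\dx=0$ on $\Msin[1]$ and evaluating at $t=0$ with $u_0\in\Sing[1]$, exactly as the paper does.

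The genuine difference is in how you produce points with $K_1\neq0$. The paper builds explicit piecewise–linear profiles $w_k\in\Sing[1]$ and computes $K_1(w_k)$ directly; this is concrete but only yields $H^1$ examples, with smoothness obtained a posteriori by density and openness. Your construction is cleaner in the nonzero–charge sector: $u_0=2\pi kx$ lies in $\Sing[m]$ for every $m$ and gives $R(u_0)=1/k$ immediately, so smooth initial data in $U$ are exhibited without any approximation. In the zero–charge sector your route is more sophisticated: starting from the even point $\psi_0=j_{0,1}\cos(2\pi x)$ (where $R$ vanishes by symmetry), you compute the gradient of $R$ along tangent directions to \Sing, $\delta R(\phi)=2\pi\int_\T\phi\,\partial_x\lvert\dxinv e^{\ii\psi_0}\rvert^{2}\dx$, and choose $\phi$ equal to that very integrand, an odd (hence tangent), smooth, nonzero function. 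This gives $\delta R=2\pi\lVert\phi\rVert_{L^2}^{2}>0$, so $R$ is nonconstant on $\Sing[m]$ near $\psi_0$ and $U$ meets every $\Sing[m]$. The paper's piecewise–linear family is more hands–on; your variational argument is slicker and delivers the $C^\infty$ conclusion directly rather than via density.
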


We end this section with a proposition describing \Sing in more details,
\begin{prop} \label{prop:Sing}
Let $u\in\Msin$ and $c\in\T_{2\pi}$ such that $c\not= 0,\pi\pmod{2\pi}$ then:
\begin{thmenum}
\item If $u,u+c\in \Msin$  then $u\in\Sing$.
\item If $u\in\Sing$ then $u+c\in\Sing$.
\item If $u\in \Sing$ then viewing $u$ as a function $u:\R\to\R$ one has
\[
\osc(u):= \sup_{0\leq x,y\leq 1} | u(x)- u(y)|\geq \pi.
\]
\end{thmenum}
\end{prop}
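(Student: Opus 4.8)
The plan is to handle all three parts by elementary trigonometry, using repeatedly that for $u\in\Msin$ one has $\int_\T\sin(u)\dx=0$ and that the membership $u\in\Sing$ is the single complex condition $\int_\T e^{\ii u}\dx=0$ (i.e. $\int_\T\cos(u)\dx=0=\int_\T\sin(u)\dx$).

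For (a) I would expand $\sin(u+c)=\cos(c)\sin(u)+\sin(c)\cos(u)$ and integrate over $\T$. Since $u$ and $u+c$ both lie in $\Msin$, the two integrals $\int_\T\sin(u)\dx$ and $\int_\T\sin(u+c)\dx$ vanish, leaving $\sin(c)\int_\T\cos(u)\dx=0$; as $c\neq 0,\pi\pmod{2\pi}$ we have $\sin(c)\neq 0$, hence $\int_\T\cos(u)\dx=0$, and together with $\int_\T\sin(u)\dx=0$ this is exactly $u\in\Sing$. For (b) I would simply note $\int_\T e^{\ii(u+c)}\dx=e^{\ii c}\int_\T e^{\ii u}\dx$, which vanishes as soon as $\int_\T e^{\ii u}\dx=0$; so $u\in\Sing$ gives $u+c\in\Sing$ (in fact for every $c\in\T_{2\pi}$, so the hypothesis $c\neq 0,\pi$ is not even needed here).

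For (c) I would first dispose of the case $k\neq 0$: writing $u=2\pi kx+w$ with $w\in H_0^m+\T_{2\pi}$ and lifting $w$ to a continuous $1$-periodic real-valued function (legitimate since $m\geq 1$), one gets $|u(1)-u(0)|=|2\pi k+w(1)-w(0)|=2\pi|k|\geq 2\pi$, whence $\osc(u)\geq 2\pi>\pi$. It then remains to treat $k=0$, where $u$ may be regarded as a continuous $1$-periodic real function — its oscillation and its membership in $\Sing$ being independent of the representative chosen modulo $2\pi$ — and here I would prove the contrapositive. Assume $\osc(u)<\pi$ and set $\theta:=\tfrac12\bigl(\min_\T u+\max_\T u\bigr)$, so that $|u(x)-\theta|\leq\tfrac12\osc(u)<\tfrac\pi2$ for every $x$ and hence $\cos(u(x)-\theta)>0$ pointwise. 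Integrating over $\T$ yields
\[
\cos(\theta)\int_\T\cos(u)\dx+\sin(\theta)\int_\T\sin(u)\dx=\int_\T\cos(u-\theta)\dx>0,
\]
so $\bigl(\int_\T\cos(u)\dx,\int_\T\sin(u)\dx\bigr)\neq(0,0)$, i.e.\ $u\notin\Sing$, which is the contrapositive of the claim for $k=0$.

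None of these steps is genuinely difficult; the only points that need a little care are the correct interpretation of $\osc(u)$ for the $\T_{2\pi}$-valued elements of $\Leave$ (one must pass to a continuous lift and check that neither $\osc$ nor the condition $\int_\T e^{\ii u}\dx=0$ depends on that lift) and the observation that part (c) is cleanest when one separates the trivial high-charge case $k\neq 0$ from the case $k=0$, the latter being settled in contrapositive form by the elementary fact that a probability measure concentrated in a circular arc of angular width less than $\pi$ has nonzero barycenter.
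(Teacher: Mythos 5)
Your proof is correct and follows essentially the same route as the paper: parts (a) and (b) are the identity $\int_\T e^{\ii(u+c)}\dx=e^{\ii c}\int_\T e^{\ii u}\dx$ in real and complex form, and part (c) is the same ``arc of width less than $\pi$ has nonzero barycenter'' argument, merely normalized at the midpoint with $\cos(u-\theta)>0$ instead of the paper's normalization $\min u=0$ with $\sin(u)\geq 0$. Your explicit handling of the lift and of the case $k\neq 0$ is careful but not a different method.
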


\emph{Application in geometry:}
Equation (\ref{eq:sinGordon}) was already studied in the nineteenth century in the course of investigations of surfaces of constant Gaussian curvature $-1$, also referred to as pseudo spherical surfaces, see e.g. \cite{Chern1986pseudoSpheres}. 
Let $\Sigma\subset \R^2\ni(x,t)$ be an open set. There is a correspondence between any given $C^2$ immersion $r:\Sigma\to\R^3$ of constant Gaussian curvature $-1$ with
$|\partial_xr|=|\partial_tr|=1$, i.e., in Chebyshev form, and solutions $u:\Sigma\to\R$ to \eqref{eq:sinGordon}.
Indeed the first and second fundamental forms of a constant Gaussian curvature -1 immersion in Chebyshev form are given by
\begin{equation} \label{eq:fundamentalForms}
 g=\mathrm{d} x^2+2\cos (u) \mathrm{d} x\mathrm{d} t+ \mathrm{d} t^2\quad \text{and}\quad 
 A=2\sin(u) \mathrm{d} x\mathrm{d} t
\end{equation}
where $u(x,t)$ is the angle between $\partial_xr$ and $\partial_tr$ in $\R^3$ and takes values in $(0,\pi)$. 
Since $g,A$ are the first and second fundamental forms of the immersion, the Gauss-Codazzi-Mainardi equations hold. In terms of $u$ this is
\begin{equation}\label{eq:constCurvature}
\sin(u) = u_{xt}.
\end{equation}
Conversely given any smooth function $u:\Sigma \to (0,\pi)$ define $g,A$ by \eqref{eq:fundamentalForms}. Then $g,A$ fulfill the Gauss-Codazzi-Mainardi equations iff \eqref{eq:constCurvature} holds, in which case by Bonnet's Theorem there is a constant Gaussian curvature $-1$ immersion $\Sigma\to \R^3$ with fundamental forms $g,A$, see e.g.  \cite{galvez2009surfaces}.
Hence all Chebyshev pseudospherical immersions of $\R^2$ which are 1-periodic in $x$ are periodic solutions of equation (\ref{eq:sinGordon}) with values in $(0,\pi)$. According to Corollary \ref{prop:separation}, such a solution is not contained in \Sing.

\smallskip
\emph{Method of proof and related results:}
Various techniques for obtaining solutions of the complex sine-Gordon equation on the real line have been developed, amongst them the B\"acklund transformation, inverse scattering and the transformation $u=4 \atanh (w)$, see e.g.  \cite{SolitonBook}.
Solutions to equation (\ref{eq:sinGordon}) have been constructed in \cite{Ablowitz1973SineLigthLine} and \cite{Belokolos1994AlgebroGeometric}, using the algebro-geometric approach. Under appropriate conditions solutions can be represented in terms of theta functions.

In \cite{Chodos1980SineKdv}, Chodos  related the standard Lax pair of the \mkdv equation and the one of sine-Gordon equation and showed in this way that the two equations have infinitely many conserved quantities in common. More precisely, he showed that for the well known Lax pair formulation of \kdv
\[
	\partial_t L_\kdv = [L_\kdv,B_\kdv],
\]
and the Lax pair formulation of equation \eqref{eq:complexSine}
\[
 \partial_t L_{SG} = [ L_{SG},B_{SG}]
\]
there is a map mapping the periodic and anti-periodic spectra of $L_{SG}$ onto the corresponding ones of $L_\kdv$ at $q= u_{xx}+u_x^2$.  By the Lax pair formulation of equation \eqref{eq:complexSine}, the periodic and anti-periodic spectra of $L_{SG}$ are constants  of motion and hence so are the ones of $L_\kdv$. In \cite{kappeler2008mkdv} a new Lax pair formulation for the \mkdv equation,
\[
	\partial_t L_M = [L_M,B_M].
\]
is introduced and a similar map constructed mapping the spectra of $L_M$ at $v$ onto the one of $L_\kdv$ at $v_x+v^2$. Hence there is a mapping from the spectra of $L_{SG}$ at $u$ onto the spectra of $L_M$ at $u_x$.
One of the main contributions of our paper is to write \Msin as a branched double cover over $H^m_0$, with branches $\Pp,\Pm$ and ramification locus \Sing, and then to parametrize \Ppm conveniently leading to equation \eqref{eq:dxSin} and to show that the latter is a Hamiltonian PDE when viewed on the phase space of the focusing \mkdv equation. Surprisingly, to the best of our knowledge, the fact that \eqref{eq:sinGordon} is not well posed on the ramification locus \Sing was not known before.

\emph{Organisation:} 
In Section 2 we prove items $(i)$ and $(ii)$ of Theorems \ref{thm:Sinh} and \ref{thm:sin} as well as Proposition \ref{prop:separation} and \ref{prop:Sing}. Item $(iii)$ of Theorems \ref{thm:Sinh} and \ref{thm:sin} as well as Corollaries \ref{cor:BirkhoffSinh} and \ref{cor:BirkhoffSin} are proved in Section 3. In Section 4 we prove Theorem \ref{thm:nonSolvabilityOnSing}.

\emph{Acknowledgement:} The author is very grateful to Professor Thomas Kappeler
for sharing expertise, and sincere and valuable guidance and encouragement.
\section{Proofs: part 1}

\begin{proof}[Proof of Theorem \ref{thm:Sinh} $(i)$] First note that
\[	
	\int_\T\cosh(2\dxinv v)\dx \geq 1 \quad\ \text{and} \quad 
	\left|\int_\T\sinh(2\dxinv v)\dx \right| < \int_\T\cosh(2\dxinv v)\dx,
\]
imply that \Csinh is well defined. By the addition theorem for sinh, one sees that 
\begin{equation}\label{eq:cSinh}
 \int_\T \sinh(2\dxinv v+\Csinh(v))\dx =0.
\end{equation}
Hence the map \Fsinh takes indeed values in \Msinh[l+1]. Clearly $\Csinh:H_0^0 \to \R$ is real analytic and so is its restriction to $H^l_0$. One then easily verifies that $\Fsinh:H_0^l\to \Msinh[l+1]$ is a real analytic diffeomorphism onto its image. 
To see that \Fsinh is onto, note that any $u\in \Msinh[l+1]$ can be written as $2\dxinv v+c$ where $c=\int_\T u(x)\dx$ and $v\in H_0^l$. Since $\sinh(x)$ is strictly increasing on $\R$, 
\[
	\int_\T\sinh(2\dxinv v+c)\dx = 0
\]
has the unique solution $\Csinh(v)$ and hence $\Fsinh(v)=u$, proving that \Fsinh is onto. 
In a straight forward way one verifies that \Fsinh transforms \eqref{eq:sinhGordon} into \eqref{eq:dxSinh}.
\end{proof}

\begin{proof}[Proof of Theorem \ref{thm:Sinh} $(ii)$]
We compute the Hamiltonian vector field $\partial_x\partial \Hsinh$ of \Hsinh with respect to the Gardner bracket
\begin{align*}
\partial_x\partial \Hsinh &=\partial_x\partial\left( -\frac14\int_\T\cosh(2\dxinv v + \Csinh(v))\dx\right)
\\&=-\frac14\partial_x \left(-2\dxinv\sinh(2\dxinv v+\Csinh(v)) +\partial \Csinh(v)\int_\T\sinh(2\dxinv v+\Csinh(v))\dx   \right)
\\&=\frac12\sinh(2\dxinv v+\Csinh(v)).
\end{align*}
\end{proof}

\begin{proof}[Proof of Proposition \ref{prop:separation}]
For any $m\geq 0$, $T$ is obviously a bianalytic diffeomorphism. Furthermore for $u\in \Leave$, we have
\[
 \int_\T \sin(\breve u+ \pi)\dx = -\int_\T \sin(u)\dx \quad \text{and}\quad  \int_\T \cos(\breve u+ \pi)\dx = -\int_\T \cos(u)\dx
\]
implying that $T$ leaves \Sing invariant and maps \Pp onto \Pm. For the second part note that 
\[
\left. T\right|_{\Pp} = \Fsinm\circ {\Fsinp}^{-1}
\] 
since $\Fsinp$ and $\Fsinm$ both transform equation \eqref{eq:sinGordon} to equation \eqref{eq:dxSin}, hence $T$ conjugates equation \eqref{eq:sinGordon}.
\end{proof}

\begin{proof}[Proof of Proposition \ref{prop:Sing} $(i)$]
Define 
\[
   K:\Leave \to \R^2, u\mapsto   \int_\T e^{\ii u}\dx
\]
then $\Ppm = K^{-1}(\pm \R_{>0})$ and $\Sing = K^{-1}(0)$, furthermore $K(u+c)= e^{\ii c}K(u)$. Hence $u,u+c \in \Msin$ iff $K(u)= 0$ or $e^{\ii c}=\pm 1$.
\end{proof}

\begin{proof}[Proof of Proposition \ref{prop:Sing} $(ii)$]
For $u\in\Sing$ one has $K(u)=0$ this implies $K(u+c)=e^{\ii c}K(u) = 0$ for all $c\in\T_{2\pi}$.
\end{proof}

\begin{proof}[Proof of Proposition \ref{prop:Sing} $(iii)$]
Assume that $u\in \Sing$ has oscillation smaller than  $\pi$. We can assume by $(iv)$ that $\min_{x\in[0,1]}(u(x))=0$ hence $\left.u\right|_{[0,1]}$ takes values in $[0,\pi)$ and $\sin(u)$ is non negative.  As $\int_\T\sin(u)\dx=0$ it implies that $u\equiv 0$ and hence $\int_\T\cos(u)\dx=1$ which is a contradiction.
\end{proof}

\begin{proof}[Proof of Proposition \ref{prop:separation}: \Sing is a real analytic submanifold of \Leave of codimension 2]
To prove that \Sing is a real analytic submanifold note that the differential of $K$ is
\[
	 d_uK(h) = \int_\T \ii\cos(u)h - \sin(u)h \dx.
\]
It follows that $dK$ has vanishing components only for $u\equiv 0\pmod{2\pi}$ or $u\equiv \pi \pmod{2\pi}$. Hence $dK$ does not vanish on \Sing and by $(v)$ we can find $h_1,h_2\in H^{m}$ such that
\[
\left|\int_\T\cos(u)h_1\dx\right|>\left|\int_\T\sin(u)h_1\dx\right| \quad \text{and}\quad
\left|\int_\T\cos(u)h_2\dx\right|<\left|\int_\T\sin(u)h_2\dx\right|
\]
hence $d_uK(h_1)$ and $d_uK(h_2)$ are linearly independent over $\R$ hence by the implicit function theorem \Sing is a realanalytic codimension two submanifold of \Leave. 
\end{proof}

\begin{proof}[Proof of Theorem  \ref{thm:sin} $(i)$]
By definition of \Wp one has $K(2\dxinv v) \not=0$ for all $v\in \Wp$ and $\arg:\C\to\R/2\pi\Z$ is well defined such that
\[
K(2\dxinv+c) = e^{\ii c}K(2\dxinv v)\in \R_{>0} \quad \Leftrightarrow \quad c = -\arg(K(2\dxinv v))).
\]
Hence \Csin is well defined and the map \Fsinp takes values in \Pp[m+1] while \Fsinm takes values in \Pm[m+1]. By the implicit function theorem and as
\[
	\left.\partial_c\right|_{c=\Csin(v)} \int_\T\sin(2\dxinv v+c)\dx  =	\int_\T\cos(2\dxinv v+\Csin(v)) >0
\]
\Csin is real analytic on $W^0$ and so is its restriction to \Wp for any $m\geq0$. One then easily verifies that \Fsinp and \Fsinm  are real analytic diffeomorphisms onto their image. To see that \Fsinp is onto, note that any $u\in \Pp$ can be written as $2\dxinv v+c$ where $c\in\R/2\pi\Z$ and $v\in \Wp$. Since $e^{\ii c}K(2\dxinv v) = K(2\dxinv v+c)= K(u)\in \R_{>0}$ $c=\Csin(v)\pmod{2\pi}$
and hence $\Fsinp(v)=u$, proving that \Fsinp is onto. Similarly one shows that \Fsinm is onto.
Finally one verifies in a straight forward way, that \Fsinp and \Fsinm transform \eqref{eq:sinGordon} on \Pp[m+1] respectively \Pm[m+1] into equation \eqref{eq:dxSin} on  \Wp.
\end{proof}

\begin{proof}[Proof of Theorem  \ref{thm:sin} $(ii)$]
First note that at any point $v\in\Wp$ the tangent space of \Wp is $H_0^m(\T,\R)$. The derivative $\partial_v \Hsin$ in direction $h\in H_0^m(\T,\R)$ can be computed as follows
\begin{align*}
\partial_v \Hsin(h) &=\partial_v\left( \frac14\int_\T\cos(2\dxinv v + \Csin(v))\dx\right)(h)
\\&=\frac14\int_\T-\sin(2\dxinv v+\Csin(v))\left(2\dxinv h + \partial_v \Csin(h)\right)\dx
\end{align*}
using that  $\int_\T\sin(\dxinv v+\Csin(v))\dx=0$ and that $\partial_v \Csin(h)$ is independent of $x$ one has
\begin{align*}
\partial_v \Hsin(h) &=\frac14\int_\T-\sin(2\dxinv v+\Csin(v))2\dxinv h\dx 
\\&=\frac12\int_\T \dxinv\sin(2\dxinv v+\Csin(v))h\dx - \left.\frac12\sin(2\dxinv v+\Csin(v))\dxinv h\right|_{x=0}^{x=1}.
\end{align*}
As $\sin(\dxinv v+\Csin(v))$ is 1-periodic in $x$ and $\left.\dxinv h\right|_{x=0}^{x=1}=0$ one concludes
\begin{align*}
\partial_v \Hsin(h)  &=\frac12\int_\T \dxinv\sin(2\dxinv v+\Csin(v))h\dx.
\end{align*}
As a consequence
\begin{align*}
\partial_x\partial \Hsin &=\partial_x\frac12\dxinv\sin(2\dxinv v+\Csin(v))
=\frac12\sinh(2\dxinv v+\Csin(v)).
\end{align*}
\end{proof}

\section{Proofs: part 2}
To construct Birkhoff  coordinates for the Hamiltonians  \eqref{H:sinh} and \eqref{H:sin}, we use results for the  \mkdv equation and its hierarchy in \cite{kappeler2008mkdv}. In order to state the theorem on Birkhoff coordinates for \mkdv   from \cite{kappeler2008mkdv}, we introduce for any  $\alpha\in\R$ $h_\alpha=\ell_\alpha^2\times \ell_\alpha^2$, where $\ell_\alpha^2=\ell_\alpha^2(\N,\R)$ is the weighted $\ell^2$ sequence space given by
 \[
  \ell_\alpha^2:=\left\{\;x\in \ell^2(\N,\R)\;:\;\norm{x}_\alpha=\left(\sum_{k\geq 1}k^{2\alpha}|x_k|^2\right)^{1/2}<\infty\;\right\},
 \]
it is endowed with the standard Poisson bracket where $\{x_n,y_k\}=-\{y_k,x_n\}=\delta_{nk}$, whereas all other brackets between coordinate functions vanish. 

The defocusing \mkdv is a Hamiltonian PDE on $H^1$ with respect to the Gardner bracket and Hamiltonian 
\[
 H^\text{mkdv} :H^1\to \R, \quad v\mapsto\frac12 \int_\T v_x^2+v^4 \mathrm{d}x.
\]

\begin{thm*}[\cite{kappeler2008mkdv}] \label{thm:mkdvBirkhoff} There exists a map
\[
\Omega : H^1 \to h_{3/2} \times \R,  v \mapsto ((x_k , y_k )_{k\geq 1} , [v])
\]
with the following properties:
\begin{enumerate}
 \item for any $m \geq 1$, $\Omega: H^m \to h_{m +1/2} \times \R$ is a real analytic diffeomorphism;
 \item $\Omega$ is canonical, i.e. it preserves the Poisson brackets;
 \item $H^\mkdv\circ \Omega^{-1}$ only depends on the action variables, $I_k := (x_k^2 + y_k^2 )/2, k \geq 1$, and the value of the Casimir $[v]$, i.e. it is in Birkhoff normal form.
 \end{enumerate}
The map $\Omega$ is referred to as Birkhoff map and $(x_k , y_k )_{k\geq 1}$ as
Birkhoff coordinates for $\mkdv$.
\end{thm*}

\begin{rem*}
To extend this result to the case $m=0$ one can follow the proof in \cite{kappeler2008mkdv} and use results for \kdv in \cite{kappeler2005birkhoffHm1KdV}.
\end{rem*}

These coordinates have the property that all Hamiltonian PDE's in the Poisson algebra of equation \eqref{eq:mkdvfocusing} transform into a Hamiltonian PDE in Birkhoff normal form. Furthermore as the mapping is real analytic the complexification of the Hamiltonian system with Hamiltonian
\[
H_\C^\text{mkdv} :H^m(\T,\C)\to \C, \quad v\mapsto\frac12 \int_\T v_x^2+v^4 \mathrm{d}x
\]
is in normal form in a complex neighborhood of $H^m$ in $H^m(\T,\C)$, since the Lax-Pair formulation is still valid in this case. Note that the focusing \mkdv is the restriction of this system to the invariant subspace $H^m(\T,\ii \R)$, this implies, that the focusing \mkdv equation, and hence its Poisson algebra, has Birkhoffcoordinates around 0.

To prove item $(iii)$ of Theorems \ref{thm:Sinh} and \ref{thm:sin} we write equation \eqref{eq:complexSine} as a Hamiltonian PDE for $V = \partial_x U$ on a neighborhood $\nbhd$ of $H_0^m$ and $\Wp$ in $H_0^m(\T,\C)$ with Hamiltonian 
\[
H_{SG}(V) = \int_{\T} \cosh(2\dxinv V +C(V))
\]
where $C(V)\in \C$ is the solution$\pmod{2\pi\ii}$ of 
	$\int_\T \sinh(2\dxinv V+c)\dx =0$
such that 
\[
	\Re \left(\int_\T \cosh(2\dxinv V+c)\dx\right) \geq 0.
\]
Note that the Lax formulation of the sine-Gordon equation is also valid in the complex case and that the Hamiltonians \ref{H:sinh} and \ref{H:sin} are the restrictions of $H_{SG}$ to the invariant subspaces $H^m_0$ respectively $\ii \Wp\subset H^m(\T,\ii\R)$. Where  $H^m_0$ is the restriction of the defocusing \mkdv phase space to mean zero and $\ii \Wp\subset H^m(\T,\ii\R)$ is open in the restriction, of the phase space of the focusing \mkdv, on to $k\pi$ mean valued functions. Since the mean value is a Casimir of the \mkdv Poisson structure these are symplectic.
To finally show that $H_{SG}$ is in the $\mkdv$ Poisson algebra we need to recall the results in \cite{Chodos1980SineKdv} and \cite{kappeler2008mkdv} more precisely.
For the operators $L_{SG},L_\kdv$ and $L_M$, for arbitrary potentials $U,V$ and $Q$ in the corresponding phase spaces, define $\Dl_{SG}(U,\lm),\Dl_\kdv(V,\lm)$ and respectively $\Dl_M(Q,\lm)$ to be the Discriminant of the corresponding operators, this is the trace of the corresponding Floquet matrices. It is well known that the discriminants determine the periodic and anti-periodic spectra. Now the result of Chodos states that 
\[
	\Dl_\kdv(U_{xx}+ U_x^2,\lm^2) = \Dl_{SG}(U,\lm)
\]
further in \cite{kappeler2008mkdv} it is shown that
\[
	e^{\int_\T V\dx}\Dl_M(V,\lm) = \Dl_\kdv(V_x+V^2,\lm)
\]
hence
\[
	e^{\int_\T U_x\dx}\Dl_M(U_x,\lm^2) =\Dl_{SG}(U,\lm).
\]
By 	the Lax formulation for the sine-Gordon equation the periodic spectrum of $L_{SG}(U)$ is preserved by the flow hence the periodic spectrum of $L_M(U_x)$ is a preserved quantity. This implies that the Hamiltonain \ref{H:sinh} is in the Poisson algebra of the defocusing \mkdv and hence shares the same Birkhoff map, which proves Theorem \ref{thm:Sinh} $(iii)$ and Corollary \ref{cor:BirkhoffSinh} further the Hamiltonian \ref{H:sin} is in the Poisson algebra of the focusing \mkdv and the Birkhoff map near 0 for the focusing \mkdv is a Birkhoff map for equation \ref{eq:dxSin} which is the proof of Theorem \ref{thm:sin} $(iii)$ and Corollary \ref{cor:BirkhoffSin}.\qed

\section{Proofs: part 3}
In this section we prove Theorem \ref{thm:nonSolvabilityOnSing}.
Let us first assume that for some $T>0$ $u\in C^1([0,T),\Leave[1])$ is a solution of \eqref{eq:sinGordon}.
 Note that $u_t = \dxinv \sin(u) + \partial_t[u]$, where $[u] = \int_\T u\dx$, implies that
\[
 0= \partial_t \int_\T \sin(u)\dx=-\int_\T \cos(u)u_t\dx
 = -\int_\T \cos(u)\dxinv \sin(u)\dx+\int_\T \cos(u)\partial_t[u]\dx
\]
defining $K_1(u) =\int_\T \cos(u)\dxinv \sin(u)\dx$ and using that $\partial_t[u]$ is  independent of $x$ we get the constraint
\begin{equation}\label{eq:singdt}
	-K_1(u) + \partial_t[u]\int_\T \cos(u)\dx=0.
\end{equation}
For a $C^1-$solution $v$ of equation \eqref{eq:dxSin} one can compute the derivative of $\Csin$ by the implicit function theorem and then substituting \eqref{eq:dxSin} into \eqref{eq:singdt}
one obtains
\[
	-K_1(2\dxinv v+\Csin(v))+ \partial_v\Csin(\partial_tv)\int_\T \cos(2\dxinv v+\Csin(v))\dx=0
\]
hence \eqref{eq:singdt} does not give further information about the solution $v$. On the other hand when $u\in\Sing$ one has $\int_\T \cos(u)\dx = 0$ which leads to:
\begin{lem}
For any $u_0$ in \Sing such that there exists $T>0$ and a solution $u\in C^1([0,T),\Leave[1])$ with $u(0) = u_0$, one has $K_1(u_0)=0$.
\end{lem}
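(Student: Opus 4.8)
The plan is to specialize the constraint \eqref{eq:singdt} to the initial time $t=0$. Recall how that constraint arises: if $u\in C^1([0,T),\Leave[1])$ solves \eqref{eq:sinGordon}, then $t\mapsto\int_\T\sin(u(t))\dx$ is constant (equal to $0$), so its $t$-derivative vanishes identically on $[0,T)$; writing $u_t=\dxinv\sin(u)+\partial_t[u]$ and using that $\partial_t[u]$ is independent of $x$ gives $-K_1(u)+\partial_t[u]\int_\T\cos(u)\dx=0$ for every $t\in[0,T)$. First I would simply evaluate this identity at $t=0$.

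The only point that needs care is to check that $\partial_t[u](0)$ is a finite real number, so that its product with $\int_\T\cos(u_0)\dx=0$ is legitimately $0$ rather than an indeterminate form. This is immediate from the regularity assumption: since $t\mapsto u(t)$ is $C^1$ into $\Leave[1]$ and the topological-charge part $2\pi kx$ does not depend on $t$, the one-sided derivative $u_t(0)$ exists as an element of $H^1(\T,\R)$, hence $\partial_t[u](0)=\int_\T u_t(0)\dx$ is finite. By hypothesis $u(0)=u_0\in\Sing[1]$, so $\int_\T\cos(u_0)\dx=0$ by the definition of \Sing, and therefore \eqref{eq:singdt} at $t=0$ collapses to $-K_1(u_0)=0$, i.e.\ $K_1(u_0)=0$.

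I do not expect any serious obstacle here: the whole argument is the observation that, even though a solution issuing from \Sing may leave \Sing instantaneously, the constraint \eqref{eq:singdt} still forces the nontrivial condition $K_1(u_0)=0$ on the initial datum at the one instant where $\int_\T\cos(u)\dx$ is known to vanish. This identity is precisely what will be contradicted on an open subset of $\Sing[1]$ in the proof of Theorem \ref{thm:nonSolvabilityOnSing}.
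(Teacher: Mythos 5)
Your proposal is correct and follows essentially the same route as the paper: the lemma is exactly the specialization of the constraint \eqref{eq:singdt} to $t=0$, where $\int_\T\cos(u_0)\dx=0$ kills the second term and leaves $K_1(u_0)=0$. Your extra remark that $\partial_t[u](0)$ is finite by the $C^1$ regularity is a harmless (and reasonable) bit of added care that the paper leaves implicit.
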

\begin{proof}[Proof of Theorem \ref{thm:nonSolvabilityOnSing}]
To prove Theorem \ref{thm:nonSolvabilityOnSing} we are left to show the following lemma.
\begin{lem}\label{lem:k3}
There is an open set $U$ in \Sing containing elements of any topological charge $k\in\Z$, such that $K_1(u) \not= 0$ for all $u\in U.$
\end{lem}
In the rest of this section we will prove Lemma \ref{lem:k3}. Since $K_1$ is continuous it is enough to show that for any $k\in\Z$ there is an element in \Sing which has topological charge $k$ such that $K_1(u)\not = 0$. Define for $k\in\Z$ and $r\in(0,1)$
\[
  w_k(x) = 
  \left\{ \begin{array}{lr}
		2k\pi \frac{x}{r} & x<r\\
		  2k\pi -2\pi \frac{x-r}{1-r} & x\geq r
	\end{array}\right.
\] 
Note that for $w_k$ $K(u)=0$ hence $w_k$ is in \Sing[1] further the topological charge of $w_k$ is $(k-1)$ and the following holds 
\begin{equation*}
\cos(w_0) = \left\{ \begin{array}{lr}
		\cos(2k\pi \frac{x}{r}) & x<r\\
		  \cos(2\pi\frac{x-r}{1-r}) & x\geq r
	\end{array}\right.
\quad
\dxinv \sin(w_0) =\left\{ \begin{array}{lr}
		-\frac{r}{2k\pi}\cos(2k\pi \frac{x}{r}) +(1-r)\frac{k(1-r)+r}{2k\pi}& x<r\\
		  \frac{1-r}{2\pi}\cos(2\pi \frac{x-r}{1-r}) -r\frac{k(1-r)+r}{2k\pi} & x\geq r
	\end{array}\right.
\end{equation*}
And one computes using that $\int_0^{r}\cos(w_0)\dx=\int_{r}^1\cos(w_0)\dx=0$,
\[
	K_1(w_0) = -\frac{r}{2k\pi}\int_0^{r}\cos(2k\pi \frac{x}{r})^2\dx+\frac{1-r}{2\pi} \int_{r}^1\cos(2\pi \frac{x-r}{1-r})^2\dx
\]
which by the substitutions $s=x\frac{k}{r}$  respectively $s=\frac{x-r}{1-r}$ gives
\[
 K_1(w_0) = -\frac{r^2}{2k^2\pi}\int_0^k\cos(2\pi s)^2\ds + \frac{(1-r)^2}{2\pi}\int_0^1\cos(2\pi s)^2\ds = \left(\frac{1-r}{2\pi}-\frac{r^2}{2k\pi}\right)\int_0^1\cos(2\pi s)\ds
\]
hence since $\int_0^1\cos(2\pi s)\ds \not = 0$ one has $K_1(w_0)=0$ iff
\[
	k(1-r)-r^2=0
\]
which for a given $k$ is only solved by one $r\in (0,1)$.
\end{proof}

\begin{rem}
One can read off from the proof of Theorem \ref{thm:nonSolvabilityOnSing} that for any sequences $0=t_0<t_1<\dots <t_n=1$ and $(k_i)_{1\leq i\leq n}\subset \Z$, with $n\in\N$ arbitrary such that, the function $u\in \Leave[1]$, defined by 
\[
	\left. u\right|_{[t_i,t_{i+1}]}(t) = 2\pi k_i\frac{t_{i+1}-t}{t_{i+1}-t_i} + 2\pi k_{i+1}\frac{t-t_i}{t_{i+1}-t_i},
\]
is in \Sing[1]. Furthermore, one can show that for any $(k_i)_{1\leq i\leq n}\subset\Z$, the set of sequences $(t_i)_{1\leq i\leq n}$ as above with $K_1(u)=0$ has Hausdorff dimension $n-1$. Most likely, the subset where $K_1(u)=0$, is a submanifold of \Sing. On this submanifold one could repeat the arguments of Theorem \ref{thm:nonSolvabilityOnSing} and take the time derivative of $K_1(u(t))$ to get another constraint of the form $K_2(u)=0$. Iterating this procedure one might obtain infinitely many additional constraints.
\end{rem}

\bibliographystyle{acm}
\bibliography{literatur.bib}

\begin{thebibliography}{10}

\bibitem{Ablowitz1973SineLigthLine}
{\sc Ablowitz, M.~J., Kaup, D.~J., Newell, A.~C., and Segur, H.}
\newblock {Method for solving the sine-{G}ordon equation}.
\newblock {\em Phys. Rev. Lett. 30\/} (1973), 1262--1264.

\bibitem{Belokolos1994AlgebroGeometric}
{\sc Belokolos, E., Bobenko, A., Enol'skij, V., Its, A., and Matveev, V.}
\newblock {\em {Algebro-geometric approach to nonlinear integrable equations.}}
\newblock {Springer Series in Nonlinear Dynamics. Berlin: Springer-Verlag. xii,
  337 p. DM 138.00; {\"o}S 1076.40; sFr. 138.00 }, 1994.

\bibitem{Chern1986pseudoSpheres}
{\sc Chern, S.~S., and Tenenblat, K.}
\newblock {Pseudospherical Surfaces and Evolution Equations}.
\newblock {\em Stud. Appl. Math. 74}, 1 (Feb. 1986), 55--83.

\bibitem{Chodos1980SineKdv}
{\sc Chodos, A.}
\newblock {Simple connection between conservation laws in the Korteweg-de Vries
  and sine-Gordon systems}.
\newblock {\em Phys. Rev. D 21\/} (May 1980), 2818--2822.

\bibitem{galvez2009surfaces}
{\sc G{\'a}lvez, J.~A.}
\newblock {Surfaces of constant curvature in 3-dimensional space forms}.
\newblock {\em Matem{\'a}tica Contempor{\^a}nea 37\/} (2009), 01--42.

\bibitem{NovikovGrinevich2003topological}
{\sc Grinevich, P., and Novikov, S.}
\newblock {Topological charge of the real periodic-gap Sine-Gordon solutions}.
\newblock {\em Communications on pure and applied mathematics 56}, 7 (2003),
  956--978.

\bibitem{kappeler2005birkhoffHm1KdV}
{\sc Kappeler, T., M{\"o}hr, C., and Topalov, P.}
\newblock {Birkhoff coordinates for KdV on phase spaces of distributions}.
\newblock {\em Selecta Mathematica 11}, 1 (2005), 37--98.

\bibitem{kappeler2008mkdv}
{\sc Kappeler, T., Schaad, B., and Topalov, P.}
\newblock {mKdV and its Birkhoff coordinates}.
\newblock {\em Physica D: Nonlinear Phenomena 237}, 10 (2008), 1655--1662.

\bibitem{SolitonBook}
{\sc Lamb, J. G.~L.}
\newblock {\em {Elements of soliton theory}}.
\newblock John Wiley \& Sons, Inc., New York, 1980.
\newblock Pure and Applied Mathematics, A Wiley-Interscience Publication.

\bibitem{Lamb1971PhysicsOptPulse}
{\sc {Lamb Jr}, G.}
\newblock {Analytical descriptions of ultrashort optical pulse propagation in a
  resonant medium}.
\newblock {\em Reviews of Modern Physics 43}, 2 (1971), 99.

\bibitem{Lebwohl1967Superconductivity}
{\sc Lebwohl, P., and Stephen, M.~J.}
\newblock Properties of vortex lines in superconducting barriers.
\newblock {\em Phys. Rev. 163\/} (Nov 1967), 376--379.

\bibitem{Skyrme1961MesonFields}
{\sc Skyrme, T. H.~R.}
\newblock Particle states of a quantized meson field.
\newblock {\em Proceedings of the Royal Society of London A: Mathematical,
  Physical and Engineering Sciences 262}, 1309 (1961), 237--245.

\end{thebibliography}

\end{document}